\newtheorem{mydef}{Definition}
\newtheorem{prop}{Proposition}
\begin{document}
	\title{Ruin Theory for Dynamic Spectrum Allocation in LTE-U Networks}
%
	
	\author{
		\IEEEauthorblockN{Aunas Manzoor, Nguyen H. Tran, Walid Saad, S. M. Ahsan Kazmi, Shashi Raj Pandey, and Choong Seon Hong}		
}
		
%
\makeatletter
\patchcmd{\@maketitle}
{\addvspace{0.5\baselineskip}\egroup}
{\addvspace{-1.5\baselineskip}\egroup}
{}
{}

\makeatother

	\maketitle
\begin{abstract}
	LTE in the unlicensed band (LTE-U) is a promising solution to overcome the scarcity of the wireless spectrum. However, to reap the benefits of LTE-U, it is essential to maintain its effective coexistence with WiFi systems. Such a coexistence, hence, constitutes a major challenge for LTE-U deployment. In this paper, the problem of unlicensed spectrum sharing among WiFi and LTE-U system is studied. In particular, a fair time sharing model based on \emph{ruin theory} is proposed to share redundant spectral resources from the unlicensed band with LTE-U without jeopardizing the performance of the WiFi system. Fairness among both WiFi and LTE-U is maintained by applying the concept of probability of ruin. In particular, the probability of ruin is used to perform efficient duty-cycle allocation in LTE-U, so as to provide fairness to the WiFi system and maintain certain WiFi performance. Simulation results show that the proposed ruin-based algorithm provides better fairness to the WiFi system as compared to equal duty-cycle sharing among WiFi and LTE-U.  
\end{abstract}
\vspace*{-0.5cm}
\begin{IEEEkeywords}
	Chance-Constrained Optimization, LTE-U, Resource Allocation, Ruin Theory, Surplus Process, WiFi Fairness
\end{IEEEkeywords} 
\IEEEpeerreviewmaketitle
\vspace*{-0.6cm}
	\section{Introduction}
	\lettrine[findent=2pt]{\textbf{T}}{he} unprecedented rise in the demand for wireless data traffic is constantly straining the capacity of existing cellular networks 
	thus yielding major challenges related to spectrum scarcity \cite{kazmi2016optimized}. 
	In order to overcome this spectrum scarcity, the idea of operating wireless cellular networks in the unlicensed band, using the so-called LTE in unlicensed band (LTE-U) technology has recently been proposed \cite{7143339}.

	To effectively reap the benefits of LTE-U deployment, cellular network operators must ensure an effective co-existence between their LTE-U users and incumbent WiFi systems \cite{zhou2017licensed}. Guaranteeing an effective LTE-U and WiFi coexistence is challenging due to the discrepancies in medium access mechanisms for both WiFi and LTE-U. For instance, LTE-U is more spectrum efficient due to its centralized MAC scheduling protocol as compared to WiFi systems. However, LTE-U can cause WiFi performance degradation in the form of long delays and additional collisions if no fair mechanism is applied for sharing the unlicensed spectrum. 
	
	Several approaches have been recently proposed to ensure effective LTE-U and WiFi coexistence \cite{chen2017echo,ko2016fair,CU-LTE,bairagi2017lte,challita2017proactive}. 
	 Such approaches include listen-before-talk (LBT) \cite{ko2016fair} in which clear channel assessment (CCA) is used before allocating the spectrum to LTE-U. 
	 However, CCA becomes challenging to implement in a dynamic WiFi environment. Another common spectrum access mechanism in LTE-U is the so-called duty cycle based spectrum access in which a channel is accessed by LTE-U users for some time duration and then released for WiFi systems \cite{CU-LTE}. In \cite{bairagi2017lte}, a Nash bargaining game is proposed to maximize the sum rate of cellular users by exploring the unlicensed band. 
	However, the problem of efficient duty-cycle allocation for the coexistence of LTE-U and WiFi system remains a major challenge. Moreover, due to the uncertain behavior of WiFi systems,  efficient duty-cycle management in LTE-U is a significant open problem that needs to be addressed.  
	
The main contribution of this paper is, thus, to introduce a new framework for enabling a fair and efficient coexistence between LTE-U and WiFi systems. The proposed approach allows maximizing the rate of cellular systems by efficiently utilizing the unlicensed spectrum while providing sufficient throughput for WiFi. To develop such a fair coexistence mechanism, we propose a novel approach based on \emph{ruin theory}\cite{davis2005insurance}. The use of ruin theory for managing LTE-U and WiFi coexistence is apropos because ruin theory allows maintaining a certain performance guarantee for the of WiFi system via the notion of a ruin probability. This probability of ruin depends on two key factors: a) the  collisions in the WiFi system and , b) the proportion of duty cycle allocated to LTE-U. In summary, the main contributions of this paper include: 
	\begin{itemize}		
	\item We formulate the problem of allocating redundant spectrum resources from the unlicensed band to LTE-U users. The objective is to maximize the LTE-U rate while incorporating sufficient transmission opportunities for the WiFi system using chance-constrained optimization.  
	\item We use ruin theory to model the surplus process for WiFi duty-cycle. The surplus WiFi duty-cycle is then used to find the probability of ruin. 
	\item Moreover, a probability of ruin based optimization problem is formulated and solved to dynamically allocate the available unlicensed spectrum to LTE-U users.  
	\item Simulation results show that the LTE-U duty-cycle proportion decreases as the probability of ruin increases. The results also show that the proposed ruin-theoretic approach provides better fairness to the WiFi system as compared to a baseline equal duty-cycle sharing scheme among WiFi and LTE-U.      		
	\end{itemize}
	
	 To our best knowledge, this is the first work that adopts ruin theory for managing LTE and WiFi coexistence.
			
\vspace*{-0.4cm}
	\section{System Model and Problem Formulation}
	\label{sec-systemModel}	
	\vspace*{-0.1cm}
	\subsection{System Model}
	\vspace*{-0.1cm}
	Consider the downlink of a wireless cellular network consisting of one LTE small-cell base station (SBS) and a set $\mathcal{U}$ of $U$ user equipment (UE) coexisting with a WiFi system composed of a set $\mathcal{W}$ of $W$ WiFi access points (WAPs). For each WAP $w \in \mathcal{W}$, there is a corresponding set $\mathcal{U}_w$ of $U_w$ WiFi stations (WSTs). The SBS operates using a licensed-assisted access (LAA) protocol using which the unlicensed spectrum is used only for downlink traffic while uplink and other control traffic will use the licensed spectrum. WAPs use the DCF mode of the wireless LAN (WLAN) IEEE 802.11 protocol. The SBS and all of the WAPs are deployed in the same geographical area 
	and use the same unlicensed spectrum.  
		
	Multiple WAPs coexist with an SBS on the unlicensed band, and, hence, the SBS transmissions may interfere with WiFi transmissions. The WAPs will use non-overlapping channels while the SBS opportunistically accesses the unused spectrum from all of these WAPs. 
	We consider a set $\mathcal{K}$ of $K$ non-overlapping unlicensed spectrum channels. 
	
	To share the unlicensed spectrum among WiFi and LTE-U, we consider two types of time frames: A long frame and a short frame. For each channel $k$, a long frame $T$  is divided into a set $\mathcal{N}$ of $N$ short frames  each of which having a duration $\delta = 1$~ms as defined by 3GPP \cite{3gpp211}. Fig. \ref{frame} shows the long frame distribution for an arbitrary unlicensed channel that can be accessed by either the WAPs or the SBS. There are three possibilities in each short frame $n \in \mathcal{N}$ of channel $k \in \mathcal{K}$: (a) Successful WiFi transmission, (b) WiFi collision, or (c) SBS access represented as LTE-U slot as shown in Fig. \ref{frame}. 
	
	\begin{figure}[!t]
		\centering			
		\includegraphics[width=0.4\textwidth]{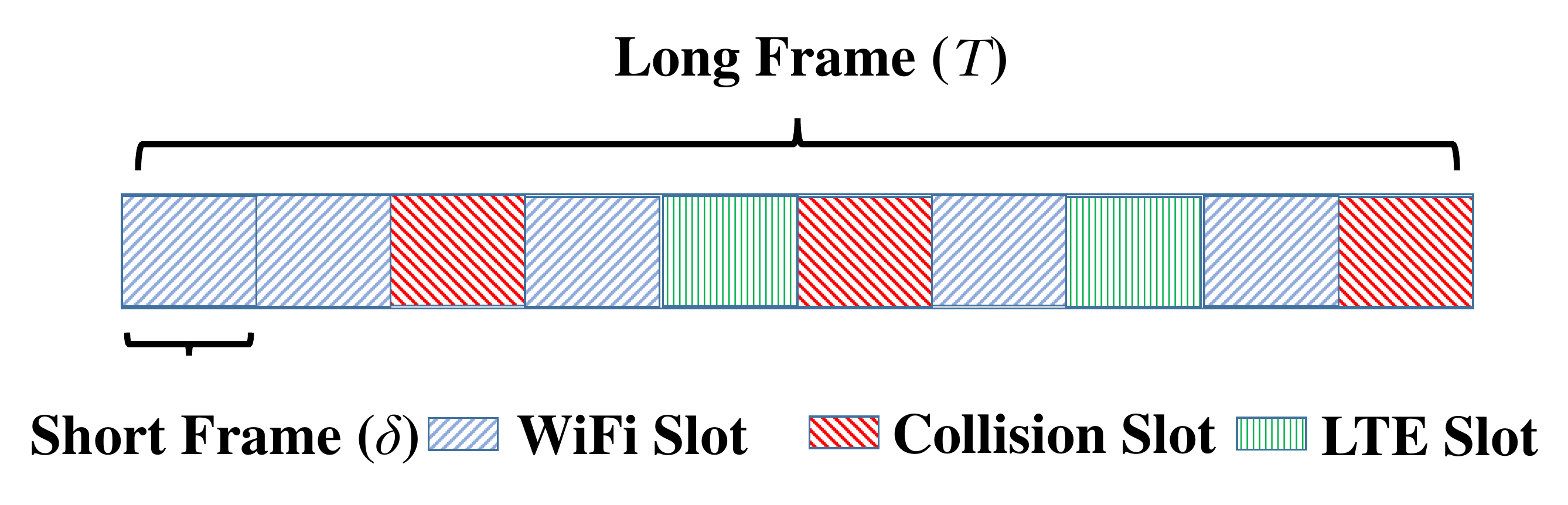}
		\caption{Frame structure for channel access. }
		\label{frame}
	\end{figure}
	
	Collisions can happen only in WiFi systems when more than one WAP tries to transmit in the same time slot. As WiFi systems sense the channel availability for a $DIFS$ period of time before transmission, no WiFi system will transmit if the SBS starts transmitting during this $DIFS$ period. On the other hand, the SBS can start transmitting when the channel is free without waiting for the $DIFS$ period. Therefore, no collision will happen during LTE-U transmission, and thus LTE-U may always access the channel. This can result in significant performance degradation of WiFi in terms of long delays and increased collisions. Therefore a fairness mechanism is required to address the coexistence issue between LTE-U and WiFi by limiting the LTE-U duty-cycle allocation. 
	
	In order to provide such fairness to the WiFi system, we allocate a fixed number of short frames $r$ out of $N$, each of duration $\delta$, to WAPs operating on channel $k$. Essentially, in every long frame $T$, there are $N$ short frames out of which $r$ short frames are allocated to WiFi and the rest can be allocated to LTE-U opportunistically while taking into account the random WiFi collisions $X$. LTE-U frames are further divided among cellular user equipments (UEs). The typical $\log$ utility for each UE $i\in \mathcal{U}$ is given by: 
	\vspace{-0.2cm}
		\begin{equation}
		\gamma_{ik} = \log \left(1 +  \frac{P_i g_{ik}}{\sigma^2} \right), 
		\end{equation}
	where $P_i$ and $g_{ik}$ represent, respectively the downlink power and channel path-loss gain for UE $i\in\mathcal{U}$ on channel $k \in \mathcal{K}$.
 	
	For every channel $k \in \mathcal{K}$, the WiFi system can experience a random number of collisions which can be represented by a random variable $X$ that follows a Poisson distribution \cite{article} with arrival rate $\lambda_k$. To maintain the WiFi system throughput, we need to ensure that there are at-least $r$ number of short frames for the successful transmissions in the WiFi system. This can be guaranteed by using a chance constraint to opportunistically choose $\alpha_{kn} \in \{0,1\}$ such that the desired WiFi performance is ensured: 
	\begin{equation}
	\Pr \left[X + \sum_{n \in \mathcal{N}}  \alpha_{kn} \leq N-r \right] \geq \xi,\quad \forall k \in \mathcal{K}.
	\label{eq:chance} 
	\end{equation}
	In (\ref{eq:chance}), $\alpha_{kn}$ indicates whether the short frame $n \in \mathcal{N}$ of channel $k \in \mathcal{K}$ will be allocated to LTE-U. The constraint in (\ref{eq:chance}) ensures that the probability of allocating a sufficient number, $r$, of short frames to the WiFi system is higher than a threshold $\xi$. 
\vspace*{-0.5cm}
	\subsection{Problem Formulation}
	\label{secProbForm}
	Given the defined system model, our goal is to maximize the rate of LTE-U while ensuring a guaranteed spectrum access opportunity for WiFi system. To achieve this goal, we formulate the following optimization problem: 
\vspace*{-0.2cm}	
	\begin{subequations}\label{eq:obj1}
	\begin{align}
	\max_{\boldsymbol{y},\boldsymbol{\alpha}} 
	& \sum_{k \in \mathcal{K}} \sum_{n \in \mathcal{N}} \sum_{i \in \mathcal{U}} \alpha_{kn} \log \left( 1 + y_{kn}^{(i)} \gamma_{ik} \right)  \tag{\ref{eq:obj1}} \\
	\text{s.t.} \quad &\label{eq:p1const1} \Pr \left[X + \sum_{n \in \mathcal{N}}  \alpha_{kn} \leq N-r \right] \geq \xi ,\quad \forall k \in \mathcal{K},\\ 
	& \label{eq:p2const2_} \sum_{i \in \mathcal{U}}  y_{k}^{(i)} \leq B \alpha_{kn}, \quad  \forall  k \in \mathcal{K},	\\
	&\label{eq:p1const3} y_{kn}^{(i)} \geq 0 , \quad \forall i \in \mathcal{U}, k \in \mathcal{K}, n \in \mathcal{N},	\\ 
	&\label{eq:p1const4} \alpha_{kn}	\in \{0,1\},\quad \forall n \in \mathcal{N}, k \in \mathcal{K}, 		
	\end{align}
\end{subequations}
where $B$ is bandwidth of unlicensed spectrum, and $y_{kn}^{(i)}$ is the bandwidth of channel $k$ allocated to user $i$ in short frame $n$. The objective function in (\ref{eq:obj1}) captures the rate of LTE-U. The constraint in (\ref{eq:p1const1}) limits the allocation of short frames to LTE-U below a threshold to give sufficient spectrum access opportunity to WiFi. 
(\ref{eq:p2const2_}) is the constraint for bounded channel allocation to LTE-U users. (\ref{eq:p1const3}) and (\ref{eq:p1const4}) are bounds for the decision variables. 

The randomness in the optimization problem due to $X$  in constraint (\ref{eq:p1const1}) and the binary variable $\alpha_{kn}$ make the problem challenging to solve. 
As such, in the next section, we propose to solve this problem by applying \emph{ruin theory} \cite{davis2005insurance}. 

\vspace*{-0.4cm}
	\section{Ruin Theory: Preliminaries}
	\label{secRuin}
	\vspace*{-0.1cm}
	 In the economics literature \cite{davis2005insurance}, ruin theory is used to model a \emph{surplus process} is defined as the process of variations in the capital of an insurance company over time. An important metric is the so-called \emph{probability of ruin} which essentially represents the probability of getting a negative surplus. We will address our problem of fair WiFi/LTE-U coexistence using \emph{ruin theory} by introducing a surplus process for the WiFi/LTE-U duty-cycle. The adoption of ruin theory as compared to other approaches like game theory, is suitable for such problems since ruin theory can efficiently capture the random performance of the WiFi system. Based on the probability of ruin of WiFi, the spectrum resources will be allocated to LTE-U.  
	
	 \label{seSurp}
	 The \emph{WiFi surplus process} is modeled using a discrete-time risk process for the WiFi duty-cycle in the long frame. This WiFi surplus is composed of three main components including: a) initial capital, b) premium rate, and c) random claims. 
	 \begin{mydef}
	 	The \emph{initial capital} represented as $u$ is defined as the initial surplus of insurance at time $0$. 
	 		 
	 	The \emph{premium rate} represented as $r$ is defined as the constant income of insurer received in unit time. 
	
	 	The \emph{claims} are randomly distributed and defined as the request to the insurance company for the compensation of loss.   
	 \end{mydef}
	 The WiFi surplus i.e.\ WiFi duty-cycle, is dependent on the number of fixed short frames $r$ allocated to WiFi, the number of random collisions happening in the WiFi system, and the short frames accessed by LTE-U, i.e., $\sum\nolimits_{n \in \mathcal{N}}\alpha_{kn}$. We relax the binary variable $\alpha_{kn} \in \{0,1\}$ to the continuous domain as $\alpha_{k} \in [0,T]$ where $\alpha_{k}$ represents proportion of LTE-U duty-cycle for channel $k \in \mathcal{K}$.
	 
	 Let, $S = \sum\nolimits_{i = 1}^{n} X_i$ be a composite random variable representing the total time wasted in collisions during $n$ number of time slots, where $X_i$ is exponentially distributed with parameter $\mu$. LTE-U duty-cycle allocation can be incorporated into $X_i$ to yield another random variable $S' = \sum\nolimits_{i = 1}^{n} Z_i$ where $Z_i = X_i + \alpha_{k}$  represents the depreciation in WiFi surplus duty-cycle by collision time and LTE-U duty-cycle. The random variable $Z_i$ is exponentially distributed with parameter $\mu' = \mu + \alpha_{k}$ where $\alpha_{k}$ is a deterministic LTE-U duty-cycle. As a result, we can formally define the WiFi surplus as follows:
	  
\begin{figure}[!t]
	\centering			
	\includegraphics[width=0.4\textwidth]{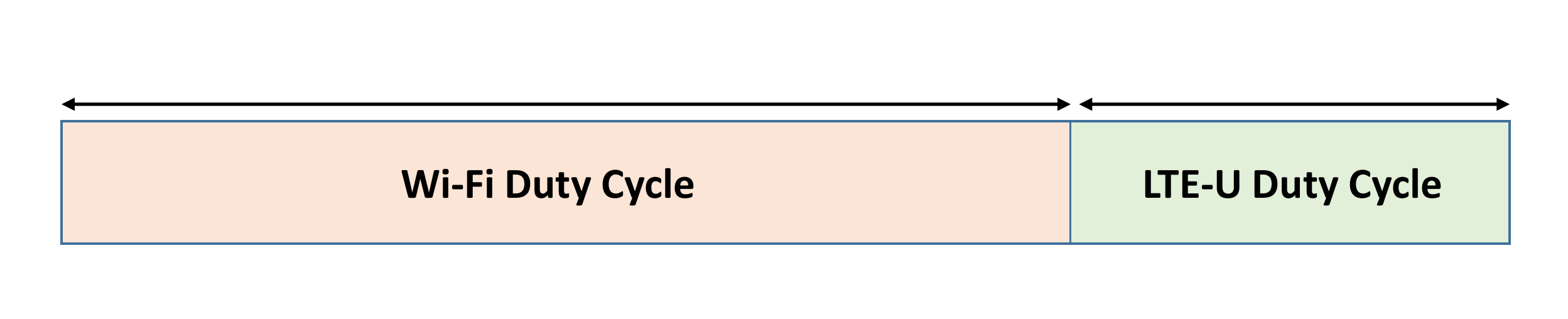}
	\caption{Duty cycle based on the probability of ruin. }
	\label{duty}
\end{figure}
	 
	  
\begin{figure*}	
	\subfigure[Probability of ruin vs. LTE-U duty-cycle proportion.\label{alpha}]{\includegraphics[width=0.31\textwidth]{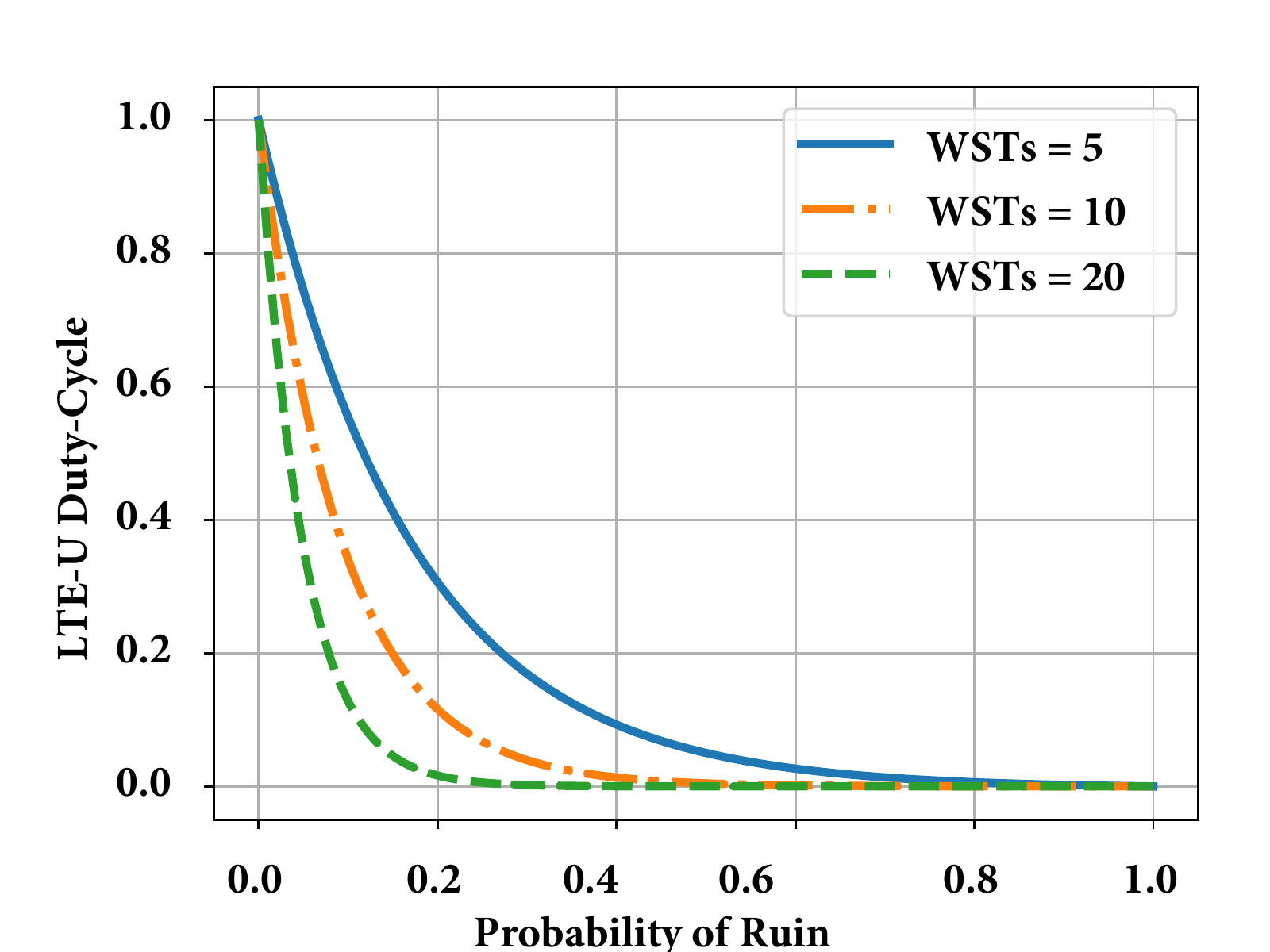}}\quad	
	\subfigure[WiFi throughput plot.\label{wifi}]{\includegraphics[width=0.32\textwidth]{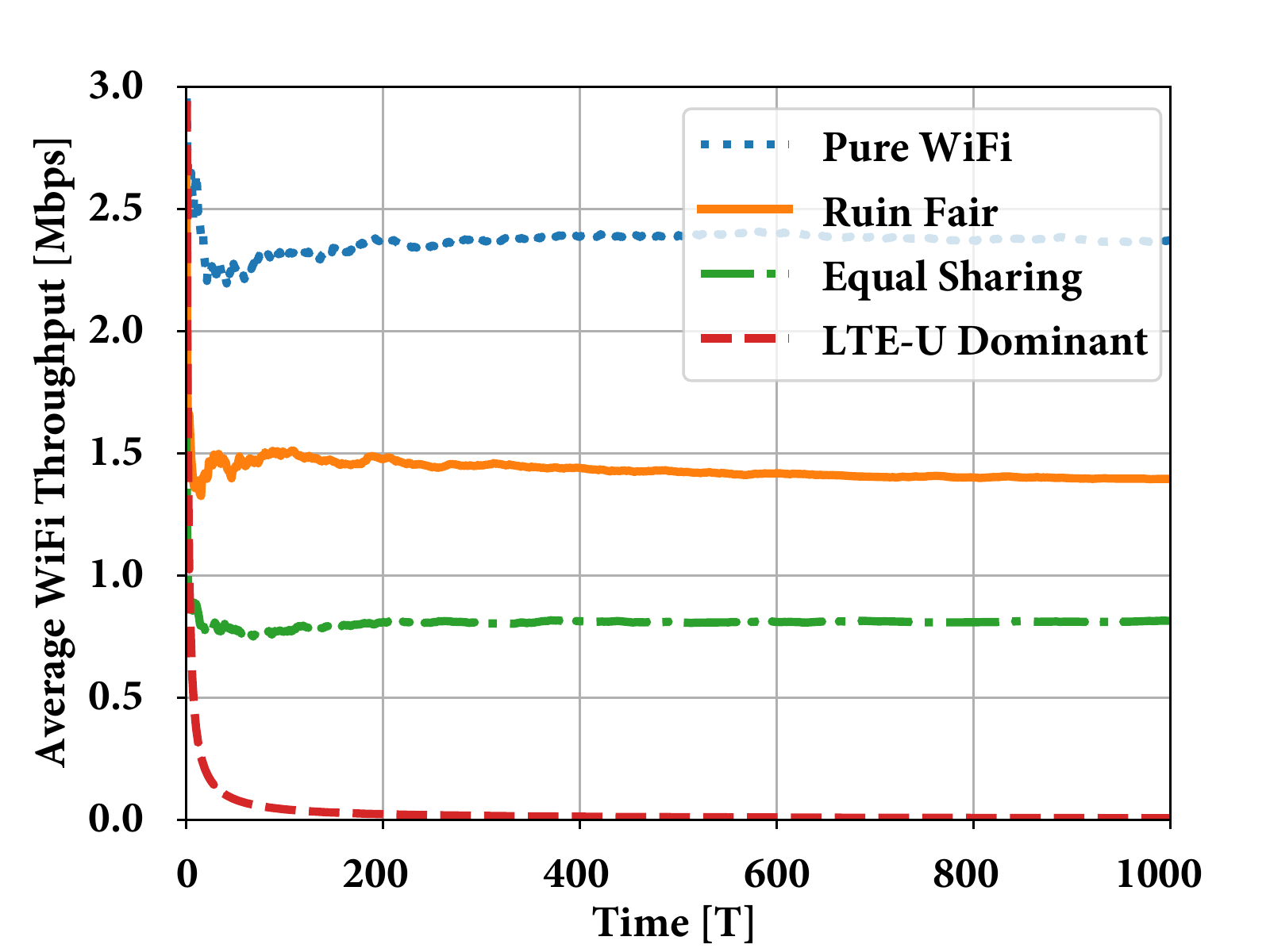}}\quad
	\subfigure[LTE-U sum rate.\label{lte-u}]{\includegraphics[width=0.31\textwidth]{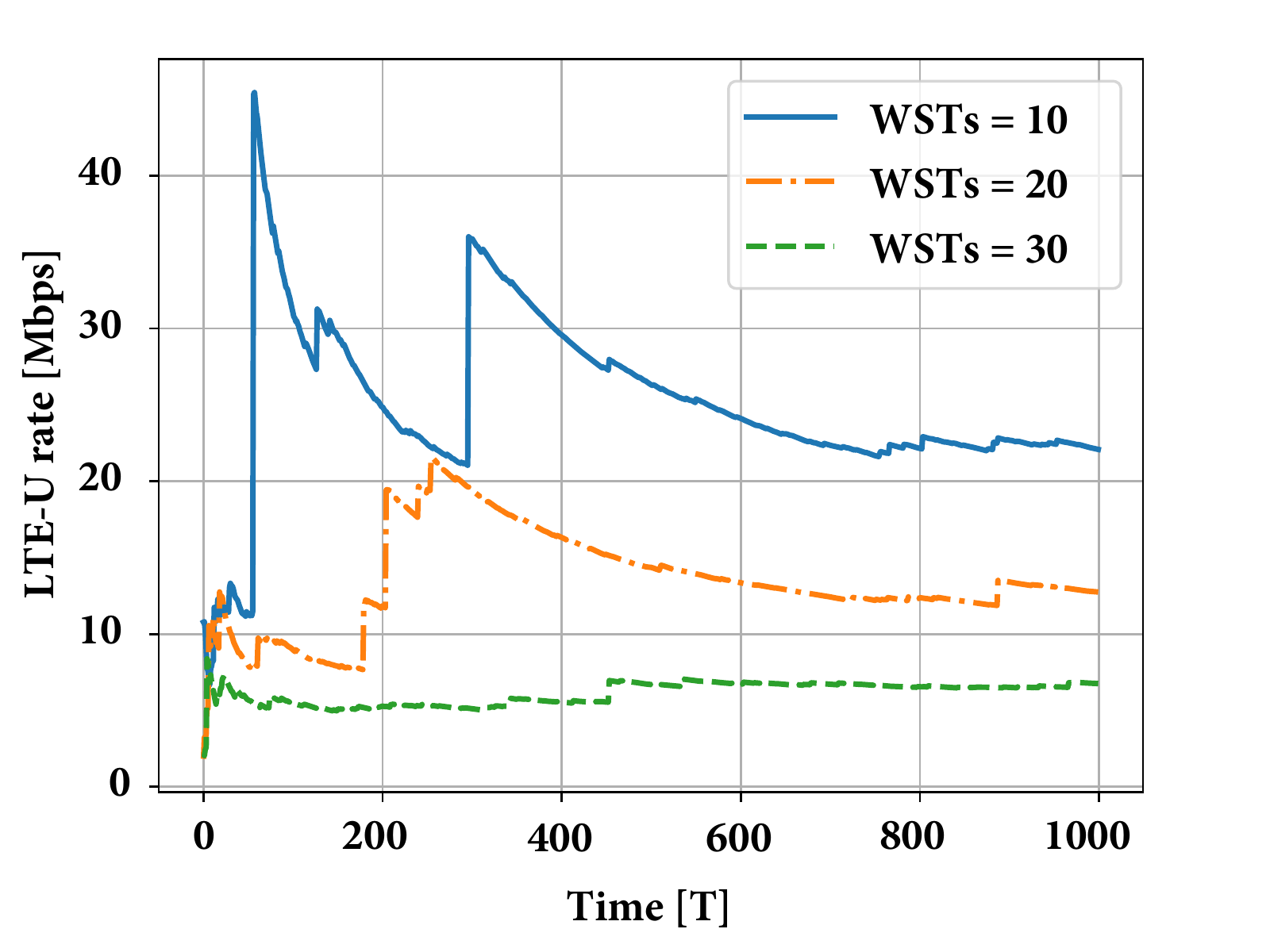}}	
	\caption{\ref{alpha} shows the decrease in LTE-U duty-cycle as the probability of ruin of WiFi increases. \ref{wifi} shows that better fairness to WiFi is provided as compared to other duty-cycle sharing schemes. \ref{lte-u} shows that the proposed fairness mechanism decreases the LTE-U rate when the number of WiFi stations in the network increases.}	
	\label{results}
\end{figure*}

\vspace*{-0.9cm}
		 	\begin{equation}
		 	U_k(n)=u_k + nr\delta - S'.
		 	\label{surplus}
			\end{equation}

		The surplus in (\ref{surplus}), which represents the WiFi duty-cycle, is used to find the finite-time probability of ruin which is defined as follows \cite{finiteruin}.

		\begin{mydef}
			The \emph{finite-time probability of ruin} is defined as the probability of getting a negative surplus at any time instant $s$ in a discrete time $n$. Mathematically, the finite time probability of ruin is represented as: 
	 	\begin{equation}
	 	\psi(u,n) = \textrm{Pr} [U (s) <0 \text{, for some $s$ as } 0 < s< n],
	 	\end{equation} 
	 	where $u$ is the starting surplus, $n$ is the total number of discrete time units, and $s$ ranges from $0$ to $n$. 
	 	\end{mydef}
	 	In the surplus process, the depreciation in surplus is modeled using an exponential distribution of parameter $\mu'$ which represents the distribution parameter of the claims in the surplus process. The finite-time probability of ruin $\psi_k(u_k,n)$ is determined as follows \cite{finiteruin}: 
	 \begin{equation}
	 \psi_k(u_k,n) = \sum_{j=1}^{n} \frac{[\mu'c_j(u_k))]^{j-1}}{(j-1)!} e^{-\mu'c_j(u_k)} \frac{c_1(u_k)}{c_j(u_k)},	 	
	 \end{equation}  
	 where $c_j(u_k) = u_k + jc$, and $c_1(u_k) = u_k +c$. $\psi_k(u_k,n)$ refers to the probability of ruin of the WiFi system at time slot $n$ with $u_k$ denoting the initial WiFi surplus for channel $k \in \mathcal{K}$. 
\vspace*{-0.5cm}	
	\section{Proposed Ruin Theory Based Solution:}
	\label{secSol}
	After finding the probability of ruin, we can use it to provide fairness while allocating the resources to LTE-U. Our goal is to maximize the rate of LTE-U while maintaining sufficient WiFi throughput. A high probability of ruin corresponds to a low WiFi throughput, therefore fewer spectrum resources will be allocated to LTE-U in order to give more spectrum access to WiFi. This is done by choosing the LTE-U spectrum allocation proportion $\alpha_k$ based on the probability of ruin and dividing this spectrum among LTE-U users to maximize the rate of LTE-U. 
	
	\begin{prop}
		Constraint (\ref{eq:p1const1}) can be transformed into an equivalent probability of ruin based constraint $\alpha_k \leq (1-\psi_k(u_k,N))T$.
		\label{prop}
	\end{prop}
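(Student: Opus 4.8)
The plan is to show that the two constraints carve out the same set of admissible LTE-U duty-cycle proportions, by identifying the \emph{violation} of the chance constraint (\ref{eq:p1const1}) with the \emph{ruin} event of the surplus process in (\ref{surplus}). First I would examine the complement of the event inside (\ref{eq:p1const1}): the WiFi system fails to retain its guaranteed $r$ short frames precisely when $X + \sum_{n \in \mathcal{N}} \alpha_{kn} > N-r$, i.e.\ when the random collision load together with the slots conceded to LTE-U exhausts the WiFi budget. I would then argue that this is exactly the first-passage event $\{U_k(s) < 0 \text{ for some } 0 < s < N\}$ driving the surplus $U_k(n) = u_k + nr\delta - S'$, where the aggregate depreciation $S' = \sum_i Z_i$ absorbs both the random collision times $X_i$ and the deterministic LTE-U proportion through $Z_i = X_i + \alpha_k$. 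Matching these two descriptions is the crux of the argument and is the step I expect to be the main obstacle, since it requires reconciling the discrete slot-counting form of (\ref{eq:p1const1}) with the continuous-time surplus formulation obtained after relaxing $\alpha_{kn} \in \{0,1\}$ to $\alpha_k \in [0,T]$, and checking that the ``for some $s$'' formulation lines up with the single aggregate inequality.

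Once the events are identified, the probability translation is immediate. By the definition of the finite-time probability of ruin,
\begin{equation}
\Pr\left[X + \sum_{n \in \mathcal{N}} \alpha_{kn} \leq N-r\right] = 1 - \psi_k(u_k,N),
\end{equation}
so that the chance constraint (\ref{eq:p1const1}) is equivalent to requiring the probability of \emph{non-ruin} of the WiFi surplus to meet the confidence level, i.e.\ $1 - \psi_k(u_k,N) \geq \xi$. At this point I would substitute the closed-form expression for $\psi_k(u_k,N)$, in which the dependence on the decision variable enters through $\mu' = \mu + \alpha_k$ and through the net drift $c = r\delta - \alpha_k$ appearing in $c_j(u_k) = u_k + jc$, to make the constraint explicitly a function of $\alpha_k$.

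The final step converts this probabilistic guarantee into the stated linear bound. The admissible LTE-U share of a long frame is the portion of $T$ during which the WiFi surplus remains non-negative; since $1 - \psi_k(u_k,N)$ is exactly the probability that WiFi is not ruined over the horizon $N$, the largest duty-cycle proportion that can be conceded to LTE-U without breaching the WiFi guarantee is $(1 - \psi_k(u_k,N))\,T$. I would therefore conclude $\alpha_k \leq (1 - \psi_k(u_k,N))\,T$. This bound is monotone in the ruin probability, as desired: a larger $\psi_k$ (a struggling WiFi system) forces a smaller LTE-U allocation, thereby recovering the fairness behaviour that the chance constraint was designed to enforce.
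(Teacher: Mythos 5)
Your proposal follows essentially the same route as the paper's proof: identify the violation of the chance constraint (\ref{eq:p1const1}) with the ruin event of the surplus process (\ref{surplus}) (taking $u_k = N$ as the initial capital and the $r$ reserved frames as the premium income), equate $\Pr[X + \sum_{n}\alpha_{kn} \leq N-r]$ with the non-ruin probability $1-\psi_k(u_k,N)$, and thereby rewrite the chance constraint as a threshold condition involving the ruin probability. The paper performs the same identification, only in the reverse direction: it starts from $(1-\psi_k(u_k,N))T \geq \tau$, expands the non-ruin probability as $\Pr[X_t \leq u_k + nc\delta - \alpha_k]$, and matches this to (\ref{eq:p1const1}) with $\eta = \tau/T$. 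So on the event-matching and probability-translation steps you are aligned with the source.

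One caution: the step you identify as the main obstacle (reconciling the slot-counting form with the surplus formulation) is in fact the part that goes through routinely; the genuine soft spot is your final paragraph. Passing from ``the non-ruin probability satisfies $1-\psi_k(u_k,N)\geq \xi$'' to ``$\alpha_k \leq (1-\psi_k(u_k,N))T$'' is asserted, not derived --- a bound on a probability and a bound on a time fraction by (probability)$\times T$ are different objects, and $\psi_k$ itself depends on $\alpha_k$ through $\mu' = \mu + \alpha_k$, so the claimed inequality is implicit in $\alpha_k$. A rigorous equivalence would need the monotonicity of $\psi_k$ in $\alpha_k$ and an argument that the two feasible sets coincide for a matched pair of thresholds. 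To be fair, the paper's own proof leaves exactly this same step implicit (it establishes the threshold equivalence and never actually produces the inequality $\alpha_k \leq (1-\psi_k(u_k,N))T$), so your write-up reproduces the paper's gap rather than introducing a new one; but if you want a proof that stands on its own, this conversion is what you must supply.
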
  
\begin{proof}\nobreak\ignorespaces
	We want to select $\alpha_k$ such that the probability of ruin of WiFi is below some threshold $\tau$ i.e.\ $	(1-\psi_k(u_k,N))T \geq  \tau$, where $\psi_k(u_k,N)$ refers to the probability of getting the surplus $u_k + nc\delta - S'$ negative. Here $S' = X_t + \alpha_k$, where $X_t$ represents the random WiFi collision time. By substituting this in the ruin-based constraint, we obtain:  
	\vspace*{-0.5cm}	
	
	\begin{equation}
	( P[  X_t   \leq u_k + nc\delta - \alpha_k])T \geq \tau \text{,  }\quad \forall  k \in \mathcal{K}.
	\end{equation}
	\vspace*{-0.5cm}	
	
	Let $\eta = \tau / T$. We have $\alpha_k = \sum_{n \in N}\alpha_{kn}$. Moreover, $u_k$ is the initial capital of WiFi. We can consider $u_k = N$, because initially, all of the $N$ slots are available for WiFi transmission. We also consider $r = nc\delta$, where $r$ represents the number of WiFi slots to needed to satisfy the chance constraint $P[X_t + \sum_{n \in N}\alpha_{kn} \leq N - r ] \geq    \eta $. 	
	\end{proof}	
	Note that $B \alpha_k$ indicates that the unlicensed bandwidth $B$ is available to LTE-U for the duration $\alpha_k$. To easily solve the problem (\ref{eq:obj1}), we decompose the formulated problem into two sub-problems. In the first sub-problem, the long frame is adaptively divided into WiFi duty cycle and LTE-U duty cycle using the constraint in Proposition \ref{prop} as shown in Fig. \ref{duty}. The probability of ruin based LTE-U duty cycle solved from the problem (\ref{eq:obj1})  can be represented as:
	\begin{equation}
	\alpha_{k}^{*} = (1-\psi_k(u_k,N))T, \quad  k \in \mathcal{K},
	\end{equation}
	  where $T$ is the total duration of long frame determined as $T = N\delta$ with $N$ being the number of short frames in a long frame and $\delta$ being the duration of each short frame.

	Given $\alpha_{k}^{*}$, the next step is to allocate the available LTE-U duty cycle $\alpha_{k}^{*}$ to LTE-U users. $y_{k}^{(i)}$ represents the proportion of LTE-U duty cycle of channel $k$ to be allocated to user $i$. This resource allocation problem can be formulated as follows: 
\begin{subequations}\label{eq:obj3}
		\begin{align}
		\max_{\boldsymbol{y}} \label{eq:obj3} \quad 
		& \sum_{ k \in \mathcal{K}}  \sum_{i \in \mathcal{U}} \alpha_{k}^{*} \log \left(1 + y_{kn}^{(i)} \gamma_{ik} \right) \\
		 \text{s.t.} \quad & \label{eq:p3const3} \sum_{i \in \mathcal{U}}  y_{k}^{(i)} \leq B \alpha_{k}^{*} , \quad k \in \mathcal{K},\\
		  &\label{eq:p3const4} y_{k}^{(i)} \geq 0 , \quad \forall i \in \mathcal{U}, k \in \mathcal{K} . 
		\end{align}
	\end{subequations}

	The above problem is convex because the objective and all of the constraints are convex. We solved this problem through water-filling and KKT conditions and obtained the following solution:
\vspace{-0.3cm}
\begin{equation}
y_{k}^{*(i)} = \left[\frac{ \alpha_{k}^{*}}{\lambda_k}  - \frac{1}{\gamma_{ik}} \right]^+ , \quad \forall i \in \mathcal{U}, k \in \mathcal{K},
\label{wtfill}
\end{equation}
From \ref{wtfill}, we get $\lambda_k^*$ which is the optimal water level chosen such that the condition
$\sum\nolimits_{i \in \mathcal{U}} y_{k}^{*(i)} = B\alpha_{k}^{*}$ is satisfied with equality for all $k \in \mathcal{K}$. Proportional resource allocation $y_{k}^{*(i)}$ is performed based on $\gamma_{ik}$ to maximize the LTE-U sum rate. The analysis shows that the first sub-problem is computing $\alpha_k$ from the probability of ruin with complexity $\mathcal{O}(N)$ and the second sub-problem is using the water-filling algorithm and has complexity $\mathcal{O}(U.K)$.  
\vspace*{-0.5cm}
\section{Simulation Results}
\vspace*{-0.1cm}
\label{secSim}
For our simulations, we consider one SBS of radius $200$~m that coexists with 3 WAPs that are uniformly distributed with each having a coverage area of $100$~m.  
The WiFi network setup is based on the IEEE 802.11 basic DCF mode (without RTS/CTS). Fig. \ref{alpha} shows that the LTE-U duty-cycle proportion is reduced when the probability of ruin for the WiFi system is high. In fact, no spectrum resources are allocated to LTE-U when $\psi(u,n) >0.4$. Furthermore, with the increase in the number of WiFi stations in the network, the LTE-U duty-cycle is reduced. The main reason is that there is an increase in the possible number of collisions in the WiFi system due to more $WSTs$ contending for channel access causing an increase in the probability of ruin. Therefore, when the number of WSTs are increased in the WiFi network, the LTE-U duty-cycle allocation is further reduced in order to accommodate the additional WSTs.

Fig. \ref{wifi} shows that the proposed ruin-fair and equal sharing approaches achieve up to $62.5\%$ and $29.17\%$ of the throughput of pure WiFi. It can be seen that the ruin-fair approach achieves better WiFi throughput. This is due to the fact that the ruin-based solution restrains LTE-U transmission in case of bad WiFi performance. Moreover, the ruin-based approach can better utilize the idle WiFi channels to improve spectrum efficiency. Conversely, equal sharing and LTE-dominant schemes suppress WiFi performance for better spectrum efficiency. 

Fig. \ref{lte-u} shows the reduction in LTE-U sum rate when the number of WSTs in the network increases which results in more WiFi collisions. We can also see that the LTE-U duty-cycle and, hence, LTE-U rate will decrease when the number of WSTs (and potential collisions) increases. This provides fairness to the WiFi system by prioritizing the WiFi network thus allowing it to maintain its performance.The proposed ruin-based approach can be leveraged in practical coexisting LTE-U and WiFi networks deployed in public hotspot areas. 
\vspace*{-0.2cm}
\section{Conclusion}
\vspace*{-0.2cm}
\label{secConc}
In this paper, we have studied the problem of LTE-U and WiFi coexistence in the unlicensed spectrum. We have formulated the coexistence problem as an optimization problem for LTE-U allocation under the WiFi fairness constraint. This problem is transformed into a ruin-theoretic problem. 
Numerical results show that fairness is provided to WiFi by allocating less resources to LTE-U accordingly whenever, there is performance degradation in the WiFi systems in the form of probability of ruin. Future work can extend this approach to 
accommodate co-existence among other technologies.\\ 
\vspace*{-0.7cm}
\bibliographystyle{IEEEtran}
\bibliography{References}

\end{document}